\newcommand{\be}{\begin{equation}}
\newcommand{\ee}{\end{equation}}
\newcommand{\bel}[1]{\begin{equation}\label{#1}}
\newcommand{\bea}{\begin{eqnarray}}
\newcommand{\eea}{\end{eqnarray}}
\newcommand{\balign}{\begin{align}}
\newcommand{\ealign}{\end{align}}
\newcommand{\ba}{\begin{array}}
\newcommand{\ea}{\end{array}}
\newcommand{\bfig}{\begin{figure}}
\newcommand{\efig}{\end{figure}}
\newcommand{\eref}[1]{(\ref{#1})}
\newcommand{\bra}[1]{\mbox{$\langle \, {#1}\, |$}}
\newcommand{\ket}[1]{\mbox{$| \, {#1}\, \rangle$}}
\newcommand{\exval}[1]{\mbox{$\langle \, {#1}\, \rangle$}}
\newcommand{\inprod}[2]{\mbox{$\langle \, {#1} \, | \, {#2} \, \rangle$}}
\newcommand{\Prob}[1]{\mbox{${\rm Prob}\left[ \, {#1}\, \right]$}}
\newcommand{\comm}[2]{\mbox{$[\,{#1}\,,\,{#2}\,]$}}
\newcommand{\half}{\frac{1}{2}}
\newcommand{\bfx}{\vec{x}}
\newcommand{\bfy}{\vec{y}}
\newcommand{\bfz}{\vec{z}}
\newcommand{\rme}{\mathrm{e}}
\newcommand{\rmd}{\mathrm{d}}
\newcommand{\C}{{\mathbb C}}
\newcommand{\R}{{\mathbb R}}
\renewcommand{\S}{{\mathbb S}}
\begin{document}

\title*{Quantum algebra symmetry of the ASEP with second-class particles}
\author{V. Belitsky and G.M.~Sch\"utz}
\institute{V. Belitsky \at Instituto de Matem\'atica e Est\'atistica,
Universidade de S\~ao Paulo, Rua do Mat\~ao, 1010, CEP 05508-090,
S\~ao Paulo - SP, Brazil, \email{belitsky@ime.usp.br}
\and G.M.~Sch\"utz \at Institute of Complex Systems II,
Forschungszentrum J\"ulich, 52425 J\"ulich, Germany \email{g.schuetz@fz-juelich.de}}
%
%
\maketitle

%
\abstract{We consider a two-component asymmetric simple exclusion process (ASEP) on a finite lattice
with reflecting boundary conditions. For this process, which is equivalent to the
ASEP with second-class particles, we construct the representation
matrices of the quantum algebra $U_q[\mathfrak{gl}(3)]$ that commute with the generator.
As a byproduct we prove reversibility and obtain in explicit form
the reversible measure. A review of the algebraic techniques used in the proofs is given.}

\section{Introduction}
\label{sec:1}

The standard asymmetric simple exclusion process (ASEP) \cite{Ligg85,Ligg99,Schu01} defined on a 
finite lattice  with reflecting boundary conditions
is a reversible process with explicitly known invariant measures. 
They are not translation invariant, in contrast to the non-reversible uniform measures for periodic
boundary conditions, and form the finite-size analogs of reversible blocking measures \cite{Ligg85}.
It has been shown in \cite{Sand94}
that these measures can be constructed by using a non-Abelian symmetry property of the generator,
viz. its commutativity with the generators of the quantum algebra $U_q[\mathfrak{gl}(2)]$. 

A related process of great interest is the ASEP with second class particles \cite{Ferr91}. For periodic
boundary conditions the invariant measures, which are non-reversible, can be computed in principle using 
the matrix product ansatz \cite{Derr93,Evan09,Prol09}, or
alternatively using methods from queuing theory \cite{Ferr07}. However, they have
a complicated non-uniform structure and no closed-form expression
for the stationary probabilities as a function of the particle configurations is known.

The invariant measures for the ASEP with second class particles with reflecting boundary 
have not been studied yet. However, it has been known for a long time that the generator of this
process has a quantum algebra symmetry, viz. its generator commutes with the generators of the 
quantum algebra $U_q[\mathfrak{gl}(3)]$ \cite{Alca93}. However, the corresponding representation 
matrices were never computed and the invariant measures for reflecting boundary conditions, which are
expected to be reversible measures, have remained unknown.
In this work we construct the matrix representations of the 
generators of $U_q[\mathfrak{gl}(3)]$
which commute with the generator of the ASEP with second-class particles.
This approach provides a constructive method to obtain in explicit form
reversible measures. We also review the algebraic
tools required in the proofs.

\section{Definitions and notation for the two-component ASEP}
\label{Sec:Definotat}

We set the stage and introduce some notation. 

\subsection{State space and configurations}
\label{Sec:Defconfig}

We consider the finite integer lattice $\Lambda := \{1,2,\dots,L\}$ of $L$ sites and local
occupation variables $\eta(k) \in \mathbb{S} = \{0,1,2\}$. We say that
a site $k \in \Lambda$ is occupied by a particle of type $A$ ($B$) if $\eta(k) = 0 (2)$ or that
it is empty (represented by the symbol $0$) if $\eta(k) = 1$. These local occupation variables
define the configuration $\eta =\{ \eta(1), \dots  ,\eta(L)\} \in \S^L$ of the particle system.
The fact that a site can be occupied by at most one particle of any type is the exclusion principle.

The following functions of configurations $\eta$ will play a role.
For $1\leq k \leq L$ we define the cyclic flip operation
\bel{flip}
\gamma^k(\eta) = \{ \eta(1), \dots \eta(k-1), \eta(k) + 1 (\mbox{mod } 3), \eta(k+1), \dots ,\eta(L) \}.
\ee
We define $\eta^{k\pm} := (\gamma^k)^{\pm 1}(\eta)$ and observe that $(\gamma^k)^{- 1} = (\gamma^k)^{2}$.
For $1\leq k \leq L-1$ We also define the local permutation
\bel{etaexch}
\sigma^{kk+1}(\eta) = \{ \eta(1), \dots \eta(k-1), \eta(k+1), \eta(k), \eta(k+2), \dots ,\eta(L) \}
=: \eta^{kk+1} .
\ee

We also define local occupation number variables
\bel{lonv}
a_k := \delta_{\eta(k),0}, \quad \upsilon_k := \delta_{\eta(k),1}, \quad b_k := \delta_{\eta(k),2}.
\ee
where $\delta_{k,l}$ is the usual Kronecker symbol with $k,l$ from any set.
In particular, we define the particle numbers
\bel{partnum}
N(\eta) = \sum_{k=1}^L a_k, \quad M(\eta) = \sum_{k=1}^L b_k, \quad V(\eta) = \sum_{k=1}^L \upsilon_k.
\ee
Notice that $N(\eta) + M(\eta) + M(\eta) = L$.
Occasionally we denote configurations with a fixed number $N$ particles of type $A$ and 
$M$ particles of type $B$ by $\eta_{N,M}$.\footnote{The occupation numbers can be formally regarded as 
families of mappings $a_k: \mathbb{S}^L \mapsto \{0,1\}$, $b_k: \mathbb{S}^L \mapsto \{0,1\}$ and 
should thus be understood as functions $a_k(\eta)$, $b_k(\eta)$ of $\eta$. 
Since the functional argument $\eta$ 
will always be clear from context (as is the case e.g. in \eref{partnum}), we do not write it 
explicitly. However, we shall usually write explicitly the argument for the particle
number functions $N(\eta)$, $M(\eta)$ to contrast them with their numerical values $N$, $M$.}

Another useful way to specify a configuration $\eta$ uniquely is
by indicating the particle 
positions on the lattice. We write
$\bfz(\eta) = \{ \bfx, \bfy \}$
with
$\bfx := \{ x \, :\, \eta(x) = 1 \}, \quad \bfy := \{ y \, :\, \eta(y) = 3 \}.$
We call this notation the position representation. Since the order of $A$-particles is conserved
we may label them 
consecutively from left to right by 1 to $N$, and similarly we may
label the $B$-particles by 1 to $M$.
By the exclusion principle one has
$\bfx \cap \bfy = \emptyset$ and by conservation of ordering
$ x_1 < x_2 < \dots < x_N$, $1 \leq y_1 < y_2 < \dots < y_M \leq L$. 
In multiple sums over $x_i$ and/or $y_i$ 
such sums will be understood as respecting all these exclusion constraints.
\footnote{We shall use interchangeably the arguments $\eta$, $\bfz$, 
$\{\bfx,\bfy\}$ for functions of the configurations. When the argument is clear from context
it may be omitted.}
We note the trivial, but frequently used identities
\bea
\label{partnum2}
N(\eta) \equiv N(\bfz) = |\bfx|, & & M(\eta) \equiv M(\bfz) = |\bfy| \\
\label{occupos}
a_k = \sum_{i=1}^{N(\bfz)} \delta_{x_i,k} , & &
 b_k  = \sum_{i=1}^{M(\bfz)} \delta_{y_i,k}.
\eea

For a configuration $\eta\equiv \bfz$ we also define 
the number $N_k(\eta)$ of 
$A$-particles to the left of a particle
at site $k$ and analogously 
the number $M_k(\eta)$ of $B$-particles and vacancies $V_k(\eta)$ to the left of site $k$
\bel{NyMx}
N_k(\eta) := \sum_{i=1}^{k-1} a_i = \sum_{i=1}^{N(\eta)} \sum_{l=1}^{k-1} \delta_{x_i,l}, \quad
M_k(\eta) := \sum_{i=1}^{k-1} b_i = \sum_{i=1}^{M(\eta)} \sum_{l=1}^{k-1} \delta_{y_i,l}.
\ee
Similarly we define $V_k(\eta) := \sum_{i=1}^{k-1} \upsilon_i$.

\subsection{The two-component ASEP}
\label{Sec:Defprocess}

Following \cite{Beli15} the two-component ASEP that we are going to study can be informally 
described as follows.
Each bond $(k,k+1)$, $1\leq k \leq L-1$ carries a clock which rings independently of all other clocks 
after an exponentially distributed
random time with parameter $\tau_k$ where $\tau_k=wq$ if $\eta(k+1)>\eta(k)$, 
$\tau_k=wq^{-1}$ if $\eta(k+1)<\eta(k)$ and $\tau_k=\infty$ if $\eta(k)=\eta(k+1)$. 
When the clock rings the 
particle occupation variables are interchanged and the clock acquires the 
parameter corresponding to interchanged variables.
Symbolically this process can be presented by the table of nearest neighbour particle
jumps
\be
\label{2ASEPr}
\left. \ba{l}
A0 \to 0A \\
0B \to B0 \\
AB \to BA \ea \right\}
\mbox{ with rate }  wq, \quad
\left. \ba{l}
0A \to A0 \\
B0 \to 0B \\
BA \to AB \ea \right\}
\mbox{ with rate }  wq^{-1} .
\ee
We consider reflecting boundary conditions, 
which means that no jumps from site 1 to the left and no jumps from site $L$ to the right are allowed.
We shall assume partially asymmetric hopping, i.e., $0 < q < \infty$. 
By interchanging the role of $B$-particles and vacancies this process turns into the
ASEP with second-class particles \cite{Ferr91}.

More precisely, for functions $f:\mathbb{S}^L \to \C$ we define this 
Markov process $\eta_t$ by the generator
\bel{generator}
\mathcal{L} f(\eta) := {\sum}_{\eta'\in\mathbb{S}^L}' w(\eta\to\eta') [f(\eta') - f(\eta)]
\ee
with the transition rates 
\bel{rates}
w(\eta\to\eta') = \sum_{k=1}^{L-1} w^{kk+1}(\eta) \delta_{\eta',\eta^{kk+1}}
\ee
defined in terms of the local hopping rates
\bel{localrates}
w^{kk+1}(\eta) = wq \left( a_k \upsilon_{k+1} + \upsilon_k b_{k+1} + a_k b_{k+1} \right) +
wq^{-1} \left(\upsilon_k a_{k+1} + b_k \upsilon_{k+1} + b_k a_{k+1} \right)
\ee
for a transition from a configuration 
$\eta$ to a configuration $\eta'=\eta^{kk+1}$ defined by \eref{etaexch}. 
The prime at the summation symbol \eref{generator} indicates the
absence of the term $\eta' = \eta$ which is omitted since $w(\eta\to\eta)$ is not
defined.\footnote{We shall usually omit the set 
$\mathbb{S}^L$ in the summation
symbol and simply write $\sum_{\eta}$.}

We fix more notation and summarize some well-known basic facts from the theory of Markov processes.
For a probability distribution $P(\eta)$ the expectation of a continuous function 
$f(\eta)$ is denoted by $\exval{f}_P := \sum_{\eta} f(\eta) P(\eta)$.
The transposed generator is defined by
$\mathcal{L}^T f(\eta) := {\sum}'_{\eta' \in\mathbb{S}^L} f(\eta') \mathcal{L} \, \mathbf{1}_{\eta'}(\eta)$
where $\mathbf{1}_{\eta'}(\eta) = \delta_{\eta,\eta'}$.
With this definition \eref{generator} yields for a probability distribution $P(\eta)$ the
{\it master equation} 
\bel{transgenerator}
\mathcal{L}^T P(\eta) = 
{\sum}'_{\eta'\in\mathbb{S}^L} [ w(\eta'\to\eta) P(\eta') - w(\eta\to\eta') P(\eta)].
\ee
The time-dependent probability distribution $P(\eta,t) := \Prob{\eta_t = \eta}$ follows
from the semi-group property
$P(\eta,t) = \rme^{\mathcal{L}^T t} P_0(\eta)$ with initial distribution $P_0(\eta):=P(\eta,0)$.
An invariant measure is denoted $\pi^\ast(\eta)$ and defined by 
\bel{invmeasure}
\mathcal{L}^T \pi^\ast(\eta) = 0, \quad \sum_{\eta} \pi^\ast(\eta) = 1.
\ee
A general stationary measure is denoted by $\pi$. It satisfies $\mathcal{L}^T \pi(\eta) = 0$,
but no assumption on the normalization $\sum_{\eta} \pi(\eta)$ is made.

The time-reversed process is defined by 
\bel{revgenerator}
\mathcal{L}^{rev} f(\eta) := {\sum_{\eta'}}' w^{rev}(\eta\to\eta') [f(\eta') - f(\eta)]
\ee
with 
$w^{rev}(\eta\to\eta') = w(\eta'\to\eta) \pi(\eta') / \pi(\eta)$.
The process is reversible if the rates satisfy the detailed balance condition
$w^{rev}(\eta\to\eta') = w(\eta'\to\eta)$. We remark that
\bel{revtrans}
\pi(\eta) \mathcal{L}^{rev}\, f(\eta) = \mathcal{L}^T (\pi(\eta) f(\eta))
\ee
which is a consequence of \eref{invmeasure}.

We define the transition matrix $H$ of the process by the matrix elements
\bel{transmatrix}
H_{\eta'\eta} = \left\{ \ba{ll} 
- w(\eta\to\eta') \quad & \eta \neq \eta' \\
{\sum}'_{\eta'} w(\eta\to\eta') & \eta = \eta' .
\ea \right.
\ee
with $w(\eta\to\eta')$ given \eref{rates}. In slight abuse of language we shall
also call $H$ the generator of the process.\\

\subsection{The quantum algebra $U_q[\mathfrak{gl}(n)]$}

The quantum algebra
$U_q[\mathfrak{gl}(n)]$ is the $q$-deformed universal enveloping algebra of 
the Lie algebra $\mathfrak{gl}(n)$. This associative algebra over $\C$ is
generated by $\mathbf{L}_i^{\pm 1}$, $i=1,\dots,n$ and $\mathbf{X}^\pm_i$, $i=1,\dots,n-1$
with the relations \cite{Jimb86,Burd92}
\bea
\label{Uqglndef1}
& & \comm{\mathbf{L}_i}{\mathbf{L}_j} = 0  \\
\label{Uqglncomm2}
& & \mathbf{L}_i \mathbf{X}^\pm_j = q^{\pm (\delta_{i,j+1} - \delta_{i,j})/2} \mathbf{X}^\pm_j \mathbf{L}_i\\
\label{Uqglncomm3}
& & \comm{\mathbf{X}^+_i}{\mathbf{X}^-_j} = 
\delta_{ij} \frac{(\mathbf{L}_{i+1}\mathbf{L}_i^{-1})^2 - (\mathbf{L}_{i+1}\mathbf{L}_i^{-1})^{-2}}{q-q^{-1}}
\eea
and, for $1 \leq i,j \leq n-1$,  the quadratic and cubic Serre relations 
\bea
\label{UqglnSerre1}
& & \comm{\mathbf{X}^\pm_i}{\mathbf{X}^\pm_j} = 0 \quad \mbox{ if } |i-j| \neq 1, \\
\label{UqglnSerre2}
& & (\mathbf{X}^\pm_i)^2 \mathbf{X}^\pm_j -
 (q+q^{-1}) \mathbf{X}^\pm_i \mathbf{X}^\pm_j \mathbf{X}^\pm_i + \mathbf{X}^\pm_j (X^\pm_i)^2 = 0 
 \quad \mbox{ if }|i-j| = 1.
\eea
Notice the replacement $q^2 \to q$ that we made in the definitions of  \cite{Burd92}.\\

\section{Results}

Before stating the results we introduce for $q,\,q^{-1} \neq 0$ and $x\in\C$ 
the symmetric $q$-number
\bel{qnumber}
[x]_q := \frac{q^x - q^{-x}}{q-q^{-1}}.
\ee
This definition extends straightforwardly to finite-dimensional matrices 
through the Taylor expansion of the exponential. 
We point out that $[-x]_q =  - [x]_q$, $[x]_{q^{-1}} =  [x]_q$
and $[x]_1 = x$. For integers one has the representation
\be
[n]_q = \sum_{k=0}^{n-1} q^{2k-n+1}
\ee
and the $q$-factorial
\bel{qfactorial}
[n]_q! := \left\{ \ba{ll} 1 & \quad n=0 \\
\prod_{k=1}^n [k]_q & \quad n \geq 1 \ea \right.
\ee
and the $q$-multinomial coefficients
\bel{partfunNM}
C_L(N) = \frac{[L]_q!}{[N]_q![L-N]_q!}, \quad
C_{L}(N,M) = \frac{[L]_q!}{[N]_q![M]_q![L-N-M]_q!}.
\ee

The first main result is a symmetry property of the generator.

\begin{theorem}
\label{Theosymmetry}
Let $H$ be the transition matrix \eref{transmatrix} of the two-component ASEP defined by 
\eref{generator} with asymmetry parameter 
$q$ and let $Y^\pm_i,\,L_j$, $i=1,2$, $j=1,2,3$ be matrices with matrix elements
\bea
\label{repLj}
& (L_1)_{\eta'\eta} = q^{- N(\eta)/2} \delta_{\eta',\eta},\,
(L_2)_{\eta'\eta} = q^{- V_\eta/2} \delta_{\eta',\eta},\,
(L_3)_{\eta'\eta} = q^{- M(\eta)/2} \delta_{\eta',\eta}  &  \\
\label{repYi}
& (Y^\pm_i)_{\eta'\eta} = \sum_{k=1}^L (Y^\pm_i(k))_{\eta'\eta} &
\eea
with
\bea
(Y^+_1(k))_{\eta'\eta} = q^{2 V_k(\eta) -V_\eta} \upsilon_k \, \delta_{\eta',\eta^{k-}} & &
(Y^-_1(k))_{\eta'\eta} = q^{-2 N_k(\eta) + N(\eta)} a_k \, \delta_{\eta',\eta^{k+}}\\
(Y^+_2(k))_{\eta'\eta} = q^{2 M_k(\eta) -M(\eta)} b_k \, \delta_{\eta',\eta^{k-}} & &
(Y^-_2(k))_{\eta'\eta} = q^{- 2 V_k(\eta) + V_\eta} \upsilon_k \, \delta_{\eta',\eta^{k+}}.
\eea
and $\eta^{k\pm}=(\gamma^k)^{\pm 1}(\eta)$ defined by \eref{flip}.
Then:\\
(a) The matrices $Y^\pm_i,\,L_j$, $i=1,2$, $j=1,2,3$ form a 
representation of the quantum algebra $U_q[\mathfrak{gl}(3)]$
\eref{Uqglndef1} - \eref{UqglnSerre2}.\\
(b) The transition matrix $H$ satisfies $\comm{H}{Y^\pm_i} = \comm{H}{L_j} = 0$ for $i=1,2$, $j=1,2,3$.
\end{theorem}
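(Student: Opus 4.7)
The overall plan is to identify the operators $L_j$ and $Y^\pm_i$ of \eref{repLj}--\eref{repYi} with the image of the abstract generators of $U_q[\mathfrak{gl}(3)]$ under the $L$-fold iterated coproduct $\Delta^{(L-1)}$, applied to the three-dimensional fundamental representation carried by the local state space $\{0,1,2\}$ at each site. With this identification, part (a) reduces to a single-site check in the fundamental representation, and part (b) reduces to a two-site check for the local jump operator $h^{kk+1}$.

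For (a), I would first read off from \eref{repLj}--\eref{repYi} the single-site operators $\ell_j$ and $y^\pm_i$, i.e.\ $3\times 3$ matrices acting on the span of $\{0,1,2\}$, and verify the relations \eref{Uqglndef1}--\eref{UqglnSerre2} in that fundamental representation by direct matrix inspection. The essential observation is then that the $q$-weights appearing in \eref{repYi} are precisely the Cartan-element ``tails'' produced by the iterated coproduct; for instance, $q^{2V_k(\eta)-V_\eta}$ in $Y^+_1(k)$ encodes a power of $L_2L_1^{-1}$ acting on sites $1,\dots,k-1$ together with a global normalising Cartan factor, and analogously for the other three generators via $L_2L_1^{-1}$ or $L_3L_2^{-1}$. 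Once the equalities $L_j=\Delta^{(L-1)}(\ell_j)$ and $Y^\pm_i=\Delta^{(L-1)}(y^\pm_i)$ are established, the defining relations lift automatically to $L$ sites because $\Delta$ is an algebra homomorphism, exactly as in the $U_q[\mathfrak{gl}(2)]$ construction of \cite{Sand94}.

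For (b), I would decompose $H=\sum_{k=1}^{L-1}h^{kk+1}$ according to \eref{rates}. Since each $L_j$ is diagonal in $\eta$ with eigenvalues depending only on the conserved totals $N(\eta), V(\eta), M(\eta)$, and every elementary jump in \eref{2ASEPr} preserves those totals, $[h^{kk+1},L_j]=0$ is immediate. For $[H,Y^\pm_i]$ I would expand $Y^\pm_i=\sum_l Y^\pm_i(l)$ and argue by the position of $l$ relative to the bond $(k,k+1)$: if $l<k$ the supports are disjoint and the $q$-tail of $Y^\pm_i(l)$ is insensitive to what happens at the bond; if $l>k+1$ the swap at $(k,k+1)$ leaves $V_l, N_l, M_l$ invariant, so the tail factor commutes with $h^{kk+1}$. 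The commutator thus collapses to $[h^{kk+1},Y^\pm_i(k)+Y^\pm_i(k+1)]$, and because the remaining $q$-tail depending on sites $1,\dots,k-1$ is a configuration-dependent scalar on each sector, the claim reduces to an identity on the two-site space at $(k,k+1)$.

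The hard part is precisely this two-site identity: one must check that the $9\times 9$ local generator $h^{kk+1}$ built from \eref{localrates} commutes with $\Delta(y^\pm_i)$ on $\mathbb{S}\otimes\mathbb{S}$. I plan to verify it by case analysis over the nine pairs $(\eta(k),\eta(k+1))\in\mathbb{S}^2$ for each of the four generators $Y^\pm_1,Y^\pm_2$. The underlying cancellation mechanism is that the $q$-factors produced by the coproduct are tuned so that incoming and outgoing contributions pair up against the asymmetric rates $wq$ and $wq^{-1}$ in \eref{localrates}; this is the $U_q[\mathfrak{gl}(3)]$ analogue of the single-species quantum symmetry of \cite{Sand94}, and it forces the identification of the hopping asymmetry $q$ with the deformation parameter of the quantum algebra.
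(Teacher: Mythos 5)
Your reduction of part (b) to the two-site identity $\comm{h_{k,k+1}}{Y^\pm_i(k)+Y^\pm_i(k+1)}=0$ is sound: the tails of $Y^\pm_i(l)$ for $l\notin\{k,k+1\}$, restricted to the bond, are symmetric functions of the two local occupation numbers and hence commute with $h_{k,k+1}$, and the remaining $9\times 9$ verification is finite and does hold. The genuine gap is in part (a). The matrices $Y^\pm_i$ are \emph{not} the iterated image of single-site operators under the coproduct \eref{coprod1}: writing them out, $Y^+_1(k)$ carries the tail $q^{+\hat{\upsilon}_j}$ on sites $j<k$ and $q^{-\hat{\upsilon}_j}$ on sites $j>k$, whereas $Y^-_1(k)$ carries $q^{-\hat{a}_j}$ and $q^{+\hat{a}_j}$ (and similarly $Y^+_2$ carries $\hat{b}$-tails while $Y^-_2$ carries $\hat{\upsilon}$-tails). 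Under the symmetric coproduct both $\mathbf{X}^+_i$ and $\mathbf{X}^-_i$ would carry the \emph{same} Cartan tail $q^{\mp H_i/2}$ with $H_1=\hat{a}-\hat{\upsilon}$, $H_2=\hat{\upsilon}-\hat{b}$; the discrepancy between the $+$ and $-$ tails is a genuinely $k$-dependent diagonal factor, not a ``global normalising Cartan factor''. Consequently the relations do not ``lift automatically'': you would have to introduce a twisted (asymmetric) coproduct and prove that it is an algebra homomorphism for \emph{all} the defining relations, in particular for \eref{Uqglncomm3} and the cubic Serre relations \eref{UqglnSerre2} --- and that verification is precisely the nontrivial content of part (a), which your proposal leaves unaddressed.

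The paper sidesteps this entirely. It first proves (Proposition \eref{Propdiag}) that there is an explicit diagonal matrix $R$ with $G=R^{-1}HR$, where $G$ is the Perk--Schultz chain whose symmetry under the \emph{standard} coproduct representation $X^\pm_i$, $H_i$ is known from the Hecke-algebra structure. Setting $Y^\pm_i:=RX^\pm_iR^{-1}$ then yields part (a) for free (conjugation is an algebra isomorphism) and part (b) for free ($\comm{H}{Y^\pm_i}=R\,\comm{G}{X^\pm_i}\,R^{-1}=0$), and the same $R$ delivers the reversible measure of Theorem \eref{Theoinvmeas} via $\hat{\pi}=R^2$. If you want to keep your direct route, the missing step is either the homomorphism property of the twisted coproduct or, equivalently, the identification $Y^\pm_i=RX^\pm_iR^{-1}$ for an explicit diagonal $R$.
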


The second main result concerns reversibility. 

\begin{theorem}
\label{Theoinvmeas}
The two-component exclusion process $\eta_t$ defined by \eref{generator} with asymmetry parameter 
$q$ is reversible with the reversible measure
\bel{theo1}
\pi(\eta) = q^{\sum_{k=1}^L \left(2 k - L -1 \right)\left(a_k - b_k\right) 
+  \sum_{k=1}^{L-1} \sum_{l=1}^k \left( a_l b_{k+1}  -  b_l a_{k+1} \right)}.
\ee
\end{theorem}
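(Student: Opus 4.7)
My plan is to establish reversibility by verifying the detailed balance condition $\pi(\eta)w(\eta\to\eta')=\pi(\eta')w(\eta'\to\eta)$ directly. Since the only transitions under the generator \eref{generator} are nearest-neighbour swaps $\eta\to\eta^{kk+1}$, and inspection of \eref{2ASEPr} shows that each allowed swap has a reverse also allowed with rate ratio $q^2$, the claim reduces to showing that for each of the three ``forward'' transitions $A0\to 0A$, $0B\to B0$, $AB\to BA$ the exponent of the ratio $\pi(\eta^{kk+1})/\pi(\eta)$ equals exactly $2$, whenever the swap in question is applicable at bond $(k,k+1)$.

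The first step is a harmless rewriting of the exponent in \eref{theo1}. Reindexing $m=k+1$, the double sum becomes the antisymmetric pairing $E_2(\eta):=\sum_{1\leq l<m\leq L}(a_l b_m - b_l a_m)$, while the single sum is $E_1(\eta):=\sum_{k=1}^L(2k-L-1)(a_k-b_k)$. Thus $\pi(\eta)=q^{E_1(\eta)+E_2(\eta)}$, and the task is to show $\Delta E_1 + \Delta E_2 = 2$ under each of the three forward swaps.

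I would then dispose of the three cases in turn by splitting contributions into $\Delta E_1$ and $\Delta E_2$. For $A0\to 0A$ and (symmetrically) $0B\to B0$ only one species is present at the bond; the coefficients in $E_1$ give $\Delta E_1=2$ directly, while the contributions to $\Delta E_2$ from the four families of pairs $(l,k)$, $(l,k+1)$, $(k,m)$, $(k+1,m)$ telescope against the single ``missing'' endpoint and cancel, so $\Delta E_2=0$. For the mixed transition $AB\to BA$ the coefficients in $E_1$ contribute $\Delta E_1=4$, and $E_2$ does move: the distinguished pair $(l,m)=(k,k+1)$ flips $a_l b_m-b_l a_m$ from $+1$ to $-1$ (yielding $-2$), while all other pairs containing exactly one of $k,k+1$ match up and cancel by antisymmetry of $E_2$. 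The net change is $\Delta E_1+\Delta E_2 = 4-2 = 2$, matching the rate ratio.

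The main technical obstacle is precisely this cancellation in the mixed case: both $a$- and $b$-content change at sites $k$ and $k+1$ simultaneously, so all four families of pairs contribute non-trivially and one must verify that they pair off correctly given $a_k=b_{k+1}=1$, $b_k=a_{k+1}=0$ before the swap. Once detailed balance is established for all three forward swaps (and hence automatically for their reverses), reversibility is proved, and the invariance $\mathcal{L}^T \pi = 0$ follows either by standard arguments or from \eref{revtrans} applied to $f\equiv 1$.
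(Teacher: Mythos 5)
Your proof is correct, but it takes a genuinely different route from the paper. You verify detailed balance by hand: writing the exponent as $E_1+E_2$ with $E_2=\sum_{l<m}(a_lb_m-b_la_m)$ and checking that each of the three forward swaps changes the exponent by exactly $2$, matching the rate ratio $wq/(wq^{-1})=q^2$. I checked the three cases and the bookkeeping is right — in particular the mixed case $AB\to BA$ does give $\Delta E_1=4$ and $\Delta E_2=-2$ from the flipped pair $(k,k+1)$, with all other pairs cancelling because both $a$- and $b$-content are exchanged between the two sites. The paper instead derives reversibility as a corollary of Proposition~\ref{Propdiag}: since $G=R^{-1}HR$ with $R$ diagonal and $G$ symmetric, one gets $H^T=R^{-2}HR^2$, so $\hat{\pi}=R^2$ is a reversible measure by \eref{revtrans2}, and the explicit form \eref{theo1} is read off from the diagonal of $R^2$. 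The trade-off: your argument is elementary, self-contained, and does not require constructing $R$ or invoking the Perk--Schultz chain, but it only confirms a guessed measure; the paper's route is constructive (the measure emerges from the $U_q[\mathfrak{gl}(3)]$ symmetry via the action of $(X_1^-)^N(X_2^+)^M$ on the summation vector) and the same Proposition simultaneously yields Theorem~\ref{Theosymmetry}. Two small points: distinct nontrivial bond swaps send $\eta$ to distinct configurations, so the rate \eref{rates} for a given transition is a single term — worth one sentence. And deducing $\mathcal{L}^T\pi=0$ from \eref{revtrans} is mildly circular since that identity is stated as a consequence of stationarity; stick with the standard argument of summing the detailed balance relation over $\eta'$, which gives stationarity directly from \eref{transgenerator}.
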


\begin{remark}
(i) In terms of position variables \eref{occupos} we can write
\bel{coro1b}
\pi(\eta) = q^{\sum_{i=1}^{N(\eta)} [2x_i - L - 1 - M_{x_i}(\eta)] - 
\sum_{i=1}^{M(\eta)} [2y_i - L - 1 - N_{y_i}(\eta)]}.
\ee
(ii) In terms of conjugate occupation numbers $\bar{a}_k = 1-a_k,\,\bar{b}_k = 1-b_k$ we can 
use the identity $\sum_{k=1}^{L-1} \sum_{l=1}^k (x_{k+1}-x_l) = \sum_{k=1}^L \left(2 k - L -1 \right)x_k$ 
to write
\bel{coro1bc}
\pi(\eta) = 
q^{\sum_{k=1}^{L-1} \sum_{l=1}^k \left( \bar{a}_l \bar{b}_{k+1}  -  \bar{b}_l \bar{a}_{k+1} \right)}.
\ee
(iii)
For finite $q$ one has
\bel{positive}
\pi(\eta) > 0 \quad \forall \, \eta
\ee
so that $\pi^{-1}(\eta)$ is finite. 
\end{remark}


\section{Tools}

Here we present a review of the tools that are used to prove the theorems. Some of these
tools are not standard in the context of probability theory. The first
subsection begins with simple facts included for the benefit of readers not familiar with 
the matrix representation of properties of a Markov chain \cite{Lloy96,Schu01}. 
The second subsection
summarizes more advanced algebraic material from the theory of complete integrability of
one-dimensional quantum systems.

\subsection{Generator in matrix form}

The defining equation \eref{generator} is linear and can therefore be written in matrix form
using the transition matrix \eref{transmatrix}
\bel{generator2}
\mathcal{L} f(\eta) = - \sum_{\eta'\in\mathbb{S}^L} f(\eta') H_{\eta'\eta}.
\ee
Notice that the sum includes the term $\eta'=\eta$. 
In order to write the matrix $H$ explicitly one has to choose an concrete basis, i.e., to
each configuration $\eta$ one assigns a canonical basis vector 
and defines the ordering $\iota(\eta)$ of the basis. The set of all 
basis vectors, which we denote by $\ket{\eta}$, spans the complex
vector space $\C^{|\mathbb{S}^L|}$. 
We work with
a vector space over $\C$ rather than over $\R$ since in computations one may encounter 
eigenvectors and eigenvalues of $H$ which may be complex since $H$ is in general
not symmetric. 

Before defining a convenient ordering of the basis we
make explicit the relation between the matrix
representation \eref{transmatrix} of the generator and the definition \eref{generator} of the process
and rewrite in matrix form some of the Markov properties stated above.

\subsubsection{Matrix representation of the Markov chain}

\underline{\it Biorthogonal basis, inner product and tensor product:} In our convention the basis vectors 
$\ket{\eta}$ of dimension $d=3^L$ 
are represented as column vectors. We define also the row vector $\bra{\eta} = \ket{\eta}^T$
with the biorthogonality property
\bel{inprodbasis}
\inprod{\eta}{\eta'} = \delta_{\eta \eta'}.
\ee
The superscript $T$ on vectors or matrices denotes transposition.

Consider for arbitrary $d$ two vectors
$\bra{w}$ with components $w_i \in \C$ and $\ket{v}$ with components $\upsilon_i\in \C$.
We define the inner product by
\bel{inprod}
\inprod{w}{v} = \sum_{i=1}^{d} w_i \upsilon_i
\ee
without complex conjugation.

The tensor product 
$\ket{v}\bra{w} \equiv \ket{v}\otimes \bra{w}$ is a $d \times d$-matrix with matrix elements 
$(\ket{v}\bra{w})_{i,j}=v_i w_j$. This notation follows a convenient convention 
borrowed from quantum mechanics.
Specifically, we have the representation
\bel{unitmatrix}
\mathbf{1} = \sum_{\eta} \ket{\eta}\bra{\eta}
\ee
of the $d$-dimensional unit matrix.
We recall that for two tensor vectors
$\bra{W} = \bra{w_1} \otimes \dots \otimes \bra{w_L}$, $\ket{V} = \ket{v_1} \otimes \dots \otimes \ket{v_L}$
the inner product of factorizes:
\bel{tensorfactor}
\inprod{W}{V} = \prod_{k=1}^L \inprod{w_k}{v_k}.
\ee

\underline{\it Generator:} As a consequence of biorthogonality of the basis
one has $H_{\eta'\eta} = \bra{\eta'} H \ket{\eta}$ and therefore \eref{generator} takes the form
\bel{generator3}
\mathcal{L} f(\eta) = - \bra{f} H \ket{\eta}
\ee
where the row vector $\bra{f} = \sum_{\eta} f(\eta) \bra{\eta}$ has components $f(\eta)$.
A probability measure $P(\eta_t)\equiv P(\eta,t)$ is represented by the column vector
\bel{probvec}
\ket{P(t)} = \sum_{\eta} P(\eta,t) \ket{\eta}.
\ee
The semigroup property of the Markov chain is reflected in the time-evolution equation 
\be
\ket{P(t)} = \rme^{-Ht} \ket{P_0} 
\ee
of a probability measure $P_0(\eta)\equiv P(\eta_0)$.

\underline{\it Lowest eigenvalue and eigenvector:} Next we define the {\it summation vector}
\bel{sumvec}
\bra{s} := \sum_{\eta} \bra{\eta}
\ee 
which is the row vector where all components are equal to 1.
Normalization implies
\bel{normalization1}
\inprod{s}{P(t)} = 1 \quad \forall t\geq 0.
\ee
By taking the time-derivative one has as a consequence 
\bel{probcons}
\bra{s} H = 0
\ee
which means that the summation vector is a left eigenvector of $H$ with eigenvector 0.
This property follows from the fact that a diagonal element of $H_{\eta\eta}$ is by construction
the sum of all transition rates that appear with negative sign in the same column $\eta$ of $H$.

A stationary measure, denoted by $\ket{\pi}$,
is a right
eigenvector of $H$ with eigenvalue 0, i.e.,
\bel{statvec}
H \ket{\pi} = 0.
\ee
By the Perron-Frobenius theorem 0 is the eigenvalue of $H$ with the lowest real part.
The probability vector corresponding to a normalized stationary measure \eref{normalization1}
is denoted by $\ket{\pi^\ast}$.

For the stationary distribution we define the diagonal matrices
\bel{invmeasmatrix}
\hat{\pi}^\ast :=  \sum_{\eta} \pi^\ast(\eta)  \ket{\eta}  \bra{\eta}, \quad
\hat{\pi} :=  \sum_{\eta} \pi(\eta)  \ket{\eta}  \bra{\eta}.
\ee
For ergodic processes with finite state space one has $0< \pi^\ast(\eta) \leq 1$ for all $\eta$.
Then all powers $(\hat{\pi}^\ast)^\alpha$ exist.
In terms of this diagonal matrix we can write the generator of the reversed dynamics as 
\bel{revtrans2}
H^{rev}  = \hat{\pi}^\ast H^T (\hat{\pi}^\ast)^{-1}.
\ee
This is the matrix form of \eref{revtrans}.

\underline{\it Expectation values:} The expectation $\exval{f}_P$ of a function $f(\eta)$ 
with respect to a probability distribution $P(\eta)$
is the inner product
\be
\exval{f}_P = \inprod{f}{P} = \bra{s} \hat{f} \ket{P}
\ee
where 
\bel{functionmatrix}
\hat{f} :=  \sum_{\eta} f(\eta)  \ket{\eta}  \bra{\eta}
\ee 
is a diagonal matrix with diagonal elements $f(\eta)$. 
Notice that
\bel{felement}
f(\eta) = \bra{\eta} \hat{f} \ket{\eta} = \bra{s}  \hat{f} \ket{\eta}.
\ee

\subsubsection{The tensor basis}

In order to define a convenient ordering of the basis for $H$ we choose
the tensor approach advocated in \cite{Lloy96,Schu01} for interacting particle systems.

{\it Only one site:} We begin with the basis for a single site where $\eta\in\S$. 
For a row vector of dimension 3 with components $w_i$ we write 
$(w| = (w_1,w_2,w_3)$ and column vectors of dimension 3 with components $\upsilon_i$
we write  $|v)= (v|^T$. 
We define the inner product $(w|v) := w_1\upsilon_1 + w_2\upsilon_2 + w_3\upsilon_3$.
We choose for a single site the order $\iota_1(\eta) = 1+\eta$ and denote the corresponding
canonical basis vectors of $\C^3$
\be
|A) \equiv |1) := \left(\ba{c} 1 \\ 0 \\ 0 \ea \right), \quad
|0) \equiv |2) := \left(\ba{c} 0 \\ 1 \\ 0 \ea \right), \quad
|B) \equiv |3) := \left(\ba{c} 0 \\ 0 \\ 1 \ea \right).
\ee
With the dual basis $(\eta| = |\eta)^T$ we have the biorthogonality relation
$(\eta|\eta') = \delta_{\eta,\eta'}$
The local summation vector is given by
$(s| := (A| + (0| + (B| = (1,1,1)$.

It is useful to introduce the following matrices
with the convention $|\eta)(\eta'| \equiv |\eta)\otimes (\eta'|$:
\bea
\label{creation}
& & a^+ := |A)(0|, \quad
b^+ := |B)(0|, \quad
c^+ := |A)(B|, \\
\label{annihilation}
& & a^- := |0)(A|, \quad
b^- := |0)(B|, \quad
c^- := |B)(A|.
\eea
Having in mind the action of these operators to the right on a column vector, 
we call $a^+$ and $b^+$ creation operators, $a^-$ and $b^-$ 
are called annihilation operators and $c^\pm$ are particle exchange
operators. 

We also define the projectors
\be
\label{projection}
 \hat{a} :=  |A)(A|, \quad
\hat{\upsilon} : = |0)(0|, \quad
\hat{b} :=  |B)(B|. 
\ee
and the three-dimensional unit matrix
\bel{3dunit}
\mathds{1} = \sum_\eta |\eta)(\eta| = \hat{a} + \hat{\upsilon} + \hat{b}.
\ee
They satisfy the following relations:
\bea
\label{projhatar}
a^+ \hat{a} = b^+ \hat{a} = b^- \hat{a} = c^+ \hat{a} = 0, &\quad a^- \hat{a} = a^-, 
&\quad c^- \hat{a} = c^-\\
a^- \hat{\upsilon} = b^- \hat{\upsilon} = c^+ \hat{\upsilon} = c^- \hat{\upsilon} = 0, 
&\quad a^+ \hat{\upsilon} = a^+, &\quad b^+ \hat{\upsilon} = b^+\\
\label{projhatbr}
a^+ \hat{b} = a^- \hat{b} = b^+ \hat{b} = c^- \hat{b} = 0, &\quad b^- \hat{b} = b^-, 
&\quad c^+ \hat{b} = c^+
\eea
and
\bea
\label{projhatal}
\hat{a} a^- = \hat{a} b^+ = \hat{a} b^- = \hat{a} c^- = 0, &\quad \hat{a} a^+ = a^+, 
&\quad \hat{a} c^+ =  c^+\\
\hat{\upsilon} a^+ = \hat{a} b^+ = \hat{\upsilon} c^+ = \hat{\upsilon} c^- = 0, 
&\quad \hat{\upsilon} a^- = a^-, &\quad \hat{\upsilon} b^- =  b^-\\
\label{projhatbl}
\hat{b} a^+ = \hat{b} a^- = \hat{b} b^- = \hat{b} c^+ = 0, &\quad \hat{b} b^+ = b^+, 
&\quad \hat{b} c^- =  c^-. 
\eea
%
%
With the occupation variables \eref{lonv} for a single site we have the projector properties
\bel{proj}
\hat{a} |\eta) = a |\eta), \quad \hat{b} |\eta) = b |\eta), \quad \hat{\upsilon} |\eta) = v |\eta)
\ee
for $\eta \in \{A,0,B\}$ and $a\equiv a(\eta)=\delta_{\eta,1}$ and so on.
Moreover,
\bea
\label{summ+}
& & (s| a^+ = (s| \hat{\upsilon}, \quad (s| b^+ = (s| \hat{\upsilon}, \quad (s| c^+ = (s| \hat{b},\\
\label{summ-}
& & (s| a^- = (s| \hat{a}, \quad (s| b^- = (s| \hat{b}, \quad (s| c^- = (s| \hat{a}.
\eea

{\it $L$ sites:} For a configuration $\eta\in \S^L$ it is natural and indeed convenient
to choose the ternary ordering $\iota_L(\eta) = 1+ \sum_{k=1}^L \eta(k) 3^{k-1}$ of the
basis vectors. This corresponds to
the tensor basis defined by
\bel{tensorbasis}
\ket{\eta} := |\eta_1) \otimes |\eta_2) \otimes \dots \otimes |\eta_L), \quad
\bra{\eta} := (\eta_1| \otimes (\eta_2| \otimes \dots \otimes (\eta_L|
\ee
spanning the vector space $(\C^{3})^{\otimes L}$ of dimension $3^L$.
We shall also use the notations $\ket{\bfz}$ and $\ket{\bfx,\bfy}$ instead of  $\ket{\eta}$. 
Furthermore, if a configuration $\eta$ has $N$ particles
of type $A$ and $M$ particles of type $B$ we may denote this fact by writing 
$\ket{\eta_{N,M}}$ for the corresponding basis vector.

The summation vector $\bra{s}$ is given by
\be
\bra{s} := (s|^{\otimes L}.
\ee
This is the row vector of dimension $3^L$ where all components are equal to 1.
The summation vector restricted to configurations with a fixed number
$N$ of particles of type $A$ and $M$ particles of type $B$ is denoted by 
\bel{sumvecNM}
\bra{s_{N,M}} = \sum_{\eta_{N,M}} \bra{\eta_{N,M}}.
\ee
The basis vector $\ket{\eta_{0,0}}$ corresponding to the empty lattice
is denoted by $\ket{\emptyset} = \bra{s_{0,0}}^T$.

For matrices $M$ the expression $M^{\otimes j}$ will denote the $j$-fold tensor
product of $M$ withself if $j>1$. For $j=1$ we define $M^{\otimes 1} := M$ and for $j=0$ we define 
$M^{\otimes 0} = 1$
with the $c$-number $1$.
For arbitrary $3\times 3$-matrices $u$
we define tensor operators
\be
u_k := \mathds{1}^{\otimes (k-1)} \otimes u \otimes \mathds{1}^{\otimes (L-k)}.
\ee
Multilinearity of the tensor product yields
$u_k v_{k+1} = \mathds{1}^{\otimes (k-1)} \otimes [(u \otimes \mathds{1}) (\mathds{1} \otimes v)]
\otimes \mathds{1}^{\otimes (L-k-1)} = \mathds{1}^{\otimes (k-1)} \otimes ( u \otimes v )
\otimes \mathds{1}^{\otimes (L-k-1)}$ and the commutator property
$u_k v_l = v_l u_k$ for $k\neq l$.

For $u=\hat{a}$ or $\hat{b}$ we note the projector lemma \cite{Beli15} which will be used repeatedly below.
\begin{lemma}
\label{projlem}
The tensor occupation operators $\hat{a}_k$, $\hat{b}_k$ act as projectors
\be
\label{projabL}
\hat{a}_k \ket{\eta} = a_k \ket{\eta} = \sum_{i=1}^{N(\eta)} \delta_{x_i,k} \ket{\eta}, \quad
\hat{b}_k \ket{\eta} = b_k \ket{\eta} = \sum_{i=1}^{M(\eta)} \delta_{y_i,k} \ket{\eta}
\ee
with the occupation variables $a_k$ and $b_k$ \eref{lonv} (or particle coordinates $x_i$ and $y_i$ respectively)
understood as functions of $\eta$ or $\bfz=\eta$.
\end{lemma}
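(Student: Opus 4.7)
The plan is to reduce the claim to the single-site projector identities recorded in \eref{proj}, using only multilinearity of the tensor product together with the product structure of the basis vector $\ket{\eta}$. Since the statement for $\hat{b}_k$ is entirely analogous to that for $\hat{a}_k$, I would treat only the latter in detail and mention that the $\hat{b}_k$ case goes through by identical reasoning with $\hat{a}$ and $a$ replaced by $\hat{b}$ and $b$.

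First I would expand the definition of the tensor operator,
\begin{equation*}
\hat{a}_k = \mathds{1}^{\otimes(k-1)} \otimes \hat{a} \otimes \mathds{1}^{\otimes(L-k)},
\end{equation*}
and of the basis vector $\ket{\eta} = |\eta(1)) \otimes \cdots \otimes |\eta(L))$ from \eref{tensorbasis}. By multilinearity of the tensor product, each factor $\mathds{1} |\eta(j))=|\eta(j))$ for $j\neq k$ passes through unchanged, while the $k$-th factor becomes $\hat{a}|\eta(k))$.

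Next I would invoke the single-site projector identity $\hat{a}|\eta(k))=a(\eta(k))|\eta(k))$ from \eref{proj}, noting that the scalar $a(\eta(k))=\delta_{\eta(k),0}$ coincides with the local occupation variable $a_k$ of \eref{lonv}. Pulling this scalar out in front of the reassembled tensor product yields $\hat{a}_k\ket{\eta}=a_k\ket{\eta}$, which is the first equality. The second equality, $a_k = \sum_{i=1}^{N(\eta)}\delta_{x_i,k}\ket{\eta}$, then follows by direct substitution of the position-representation identity \eref{occupos}.

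I do not anticipate any real obstacle. The only point that requires a brief comment is the double role of $a_k$: it serves as the eigenvalue of $\hat{a}_k$ on $\ket{\eta}$ \emph{and} as a function of the configuration $\eta$, so strictly speaking the equality should be read as $\hat{a}_k\ket{\eta}=a_k(\eta)\ket{\eta}$. Once this is made explicit, the lemma is a routine verification, and it is included mainly because it will be applied repeatedly in the proofs of Theorems \ref{Theosymmetry} and \ref{Theoinvmeas} that follow.
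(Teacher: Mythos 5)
Your verification is correct and is exactly the routine argument this lemma calls for: the paper itself does not spell out a proof (it defers to \cite{Beli15}), and the intended reasoning is precisely your reduction via multilinearity to the single-site identity \eref{proj} followed by the substitution \eref{occupos}. The only blemish is a stray $\ket{\eta}$ in your displayed ``second equality'' (it should read $a_k = \sum_{i=1}^{N(\eta)}\delta_{x_i,k}$ as an identity of scalars, which then multiplies $\ket{\eta}$); otherwise nothing needs to be added.
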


\begin{remark}
One obtains the diagonal matrix $\hat{f}$ \eref{functionmatrix} of a
function $f(\eta)$ by the following simple recipe:
In $f(\eta)$ one substitutes $\eta_i$ by the diagonal matrix $\hat{\eta}_i$ where 
$\hat{\eta}_i = \hat{a}_i$ if $\eta_i = 0$, $\hat{\eta}_i = \hat{\upsilon}_i$ if $\eta_i = 1$, and
$\hat{\eta}_i = \hat{b}_i$ if $\eta_i = 2$.
\end{remark}

\subsubsection{Construction of the generator in the tensor basis}

Consider first $L=2$ and denote the transition matrix \eref{transmatrix} for two sites by $h$.
From the definition of the process one computes the off-diagonal part
by observing that $\ket{0A} = \ket{\pi^{12}(\{A0\})}  = (a^- |A)) \otimes (a^+ |0))
= (a^- \otimes a^+) (|A) \otimes |0)) = a_1^- a_2^+ \ket{A0}$ and therefore
$h_{\{0A\}\{A0\}} = qw \bra{0A} a_1^- a_2^+ \ket{A0}$ for the transition $A0\to 0A$,
and so on. The corresponding diagonal elements $h_{\eta\eta}$ follow from \eref{localrates}
with the substitution of the occupation variables by the respective projectors according to 
Lemma \eref{projlem}.
With the tensor basis \eref{tensorbasis} for two sites one thus obtains
the $9\times9$-matrix 
\bel{hoppingmatrix}
h = w \left( \ba{ccccccccc}
\,\, 0 \,\, &\,\, 0 \,\, &\,\, 0 \,\, &\,\, 0 \,\, &\,\, 0 \,\, &\,\, 0 \,\, &\,\, 0 \,\, &\,\, 0 \,\, &\,\, 0 \,\,  \\
0 & q & 0 & -q^{-1} & 0 & 0 & 0 & 0 & 0 \\
0 & 0 & q & 0 & 0 & 0 & -q^{-1} & 0 & 0 \\
0 & -q & 0 & q^{-1} & 0 & 0 & 0 & 0 & 0 \\
0 & 0 & 0 & 0 & 0 & 0 & 0 & 0 & 0 \\
0 & 0 & 0 & 0 & 0 & q & 0 & -q^{-1} & 0 \\
0 & 0 & -q & 0 & 0 & 0 & q^{-1} & 0 & 0 \\
0 & 0 & 0 & 0 & 0 & -q & 0 & q^{-1}  & 0 \\
0 & 0 & 0 & 0 & 0 & 0 &0 & 0 & 0 \\
\ea \right).
\ee

Next we embed this process on two neighbouring sites $(k,k+1)$ in $\Lambda$.
By the multilinearity of the tensor product the generator becomes
$h_{k,k+1} = \mathds{1}^{ \otimes (k-1)}  \otimes h  \otimes \mathds{1}^{ \otimes (L-k-1)}$
where the hopping matrices
\bea
h_{k,k+1} & : = & w q \left( \hat{a}_k \hat{\upsilon}_{k+1} - a^-_k a^+_{k+1}  + 
\hat{\upsilon}_k \hat{b}_{k+1} - b^+_k b^-_{k+1} 
+ \hat{a}_k \hat{b}_{k+1} - c^-_k c^+_{k+1}  \right) \nonumber \\
\label{hoppingmatrixk}
& & +
w q^{-1}  \left(\hat{\upsilon}_k \hat{a}_{k+1} - a^+_k a^-_{k+1}  + 
\hat{b}_k \hat{\upsilon}_{k+1} - b^-_k b^+_{k+1} 
+  \hat{b}_k \hat{a}_{k+1} - c^+_k c^-_{k+1}  \right)
\eea
act non-trivially only on the subspace corresponding to sites $k,k+1$ in the tensor space.
The off-diagonal elements of $h_{k,k+1}$ are the transition rates $h_{\eta^{kk+1}\eta}$ in the tensor 
basis \eref{tensorbasis}. The diagonal elements of $h_{k,k+1}$ defined in \eref{transmatrix}
follow from probability conservation.
The generator for the two-component ASEP on the lattice $\{1,\dots,L\}$ then follows as
\bel{2ASEPgen}
H = \sum_{k=1}^{L-1} h_{k,k+1}.
\ee

As will be seen below, the generator $H$ is closely related to the Hamiltonian operator
of a quantum spin chain.
This is true for other interacting particle systems and motivates the
terms
``Quantum spin techniques'' \cite{Lloy96} or
``quantum Hamiltonian formalism'' \cite{Schu01} for the representation of the generator
of an interacting particle system in the tensor basis.

\begin{remark}
Because of multilinearity \eref{summ+}, \eref{summ-} are lifted to $\bra{s} a^+_k =  \bra{s} \hat{\upsilon_k}$, 
$\bra{s} b^+_k =  \bra{s} \hat{\upsilon}_k$, $\bra{s} c^+_k =  \bra{s} \hat{b}_k$ and
$\bra{s} a^-_k =  \bra{s} \hat{a}_k$, $\bra{s} b^-_k =  \bra{s} \hat{b}_k$, $\bra{s}c^-_k =  \bra{s} \hat{a}_k$.
This yields $\bra{s} h_{k,k+1} = 0$ which implies probability conservation \eref{probcons} .
\end{remark}

\subsection{Quantum algebra $U_q[\mathfrak{gl}(3)]$ and the Perk-Schultz quantum chain}
\label{gl3q}

Above we have defined the quantum algebra $U_q[\mathfrak{gl}(3)]$ \eref{Uqglndef1} -
\eref{UqglnSerre2}. It is convenient to work also with the subalgebra $U_q[\mathfrak{sl}(3)]$.

\subsubsection{Relation between $U_q[\mathfrak{gl}(n)]$ and $U_q[\mathfrak{sl}(n)]$}

We introduce generators $\tilde{\mathbf{H}}_i$, $1\leq i \leq n$ and $\mathbf{H}_i$ $1\leq i \leq n-1$ through
\bel{complement}
q^{-\tilde{\mathbf{H}}_i/2} = \mathbf{L}_i, \quad \mathbf{H}_i = 
\tilde{\mathbf{H}}_i - \tilde{\mathbf{H}}_{i+1}.
\ee
Then the quantum algebra $U_q[\mathfrak{sl}(n)]$ is the subalgebra generated by 
$q^{\pm \mathbf{H}_i/2}$, and $\mathbf{X}^\pm_i$, $i=1,\dots,n-1$ with relations
\eref{UqglnSerre1}, \eref{UqglnSerre2}
and
\bea
& & q^{\mathbf{H}_i/2} q^{-\mathbf{H}_i/2} = q^{-\mathbf{H}_i/2} q^{\mathbf{H}_i/2} = I\\
\label{Uqslncomm1b}
& & q^{\mathbf{H}_i/2} q^{\mathbf{H}_j/2} = q^{\mathbf{H}_j/2} q^{\mathbf{H}_i/2} \\
\label{Uqslncomm2b}
& & q^{\mathbf{H}_i} \mathbf{X}^\pm_j q^{-\mathbf{H}_i} = q^{\pm A_{ij}} \mathbf{X}^\pm_j\\
\label{Uqslncomm3b}
& & \comm{\mathbf{X}^+_i}{\mathbf{X}^-_j} = \delta_{ij} [\mathbf{H}_i]_q.
\eea
with the unit $I$ and the Cartan matrix
\bel{basicdefs}
 A_{ij} :=  \left\{ \ba{rl} 
2 & \mbox{ if } i=j \\ 
-1 & \mbox{ if } j = i\pm 1\\ 
0 & \mbox{else.} \ea \right.
\ee
of simple Lie algebras of type $A_n$.
That $U_q[\mathfrak{sl}(n)]$ is a subalgebra of $U_q[\mathfrak{gl}(n)]$ can be seen by noticing that 
$\sum_{i=1}^n \tilde{\mathbf{H}}_i$ belongs to the center of $U_q[\mathfrak{gl}(n)]$ \cite{Jimb86}.

We remark that the commutation relations \eref{Uqslncomm1b}, \eref{Uqslncomm2b} can be 
substituted by
\bea
\label{Uqslncomm1}
& & \comm{\mathbf{H}_i}{\mathbf{H}_j} = 0 \\
\label{Uqslncomm2}
& & \comm{\mathbf{H}_i}{\mathbf{X}^\pm_j} = \pm A_{ij} \mathbf{X}^\pm_j
\eea
and defining $q^{\pm \mathbf{H}_i/2}$ as a formal series 
through the Taylor expansion of the exponential.

\subsubsection{Finite-dimensional representations for $n=3$}

For $n=3$ the quadratic Serre relations 
\eref{UqglnSerre1} are void.
By introducing \cite{Burr90}
\bel{Def:X3pm}
\mathbf{X}^\pm_3 := q^{1/2} \mathbf{X}^\pm_1 \mathbf{X}^\pm_2 - q^{-1/2} \mathbf{X}^\pm_2 \mathbf{X}^\pm_1
\ee
the cubic Serre relations reduce to quadratic relations. One has instead of \eref{UqglnSerre2}
\bea
& & 
q^{-1/2} \mathbf{X}^\pm_1 \mathbf{X}^\pm_3 - q^{1/2} \mathbf{X}^\pm_3 \mathbf{X}^\pm_1 = 0\\
& & 
q^{1/2} \mathbf{X}^\pm_2 \mathbf{X}^\pm_3 - q^{-1/2} \mathbf{X}^\pm_3 \mathbf{X}^\pm_2 = 0.
\eea
and also
\be 
\label{Uqsu3comm2a}
\comm{\mathbf{H}_i}{\mathbf{X}^\pm_3} = \pm  \mathbf{X}^\pm_3. 
\ee

In order to distinguish the matrices corresponding to  the
three-dimensional fundamental representation 
from the abstract generators we use lower case letters. 
In terms of \eref{creation} - \eref{projection}
the three-dimensional fundamental representation of $U_q[\mathfrak{gl}(3)]$ is given by:
\bea
\label{fundrepX}
& & x_1^\pm = a^\pm, \quad x_2^\pm = b^\mp \\
\label{fundrepH3}
& & \tilde{h}_1 = \hat{a}, \quad \tilde{h}_2 = \hat{\upsilon}, \quad \tilde{h}_3 = \hat{b}, 
\eea
corresponding to
\be 
\label{fundrepH}
h_1 = \hat{a} - \hat{\upsilon}, \quad
h_2 = \hat{\upsilon} - \hat{b}.
\ee
for the representation of the generators $H_i$ of $U_q[\mathfrak{sl}(3)]$.
We also mention the representation $x^\pm_3 = \pm q^{\pm 1/2} c^\pm$.

Next we introduce the coproduct 
\bea
\label{coprod1}
\Delta(\mathbf{X}^\pm_i) & = & \mathbf{X}^\pm_i \otimes q^{\mathbf{H}_i/2} + 
q^{-\mathbf{H}_i/2} \otimes \mathbf{X}^\pm_i \\
\label{coprod2}
\Delta(\mathbf{H}_i) & = & \mathbf{H}_i \otimes \mathds{1} + \mathds{1} \otimes \mathbf{H}_i.
\eea
By repeatedly applying the coproduct to the fundamental representation, 
we construct the tensor representations of $U_q[\mathfrak{sl}(3)]$, denoted by capital letters,
\bel{repX}
X^\pm_i = \sum_{k=1}^L  X^\pm_i(k), \quad H_i = \sum_{k=1}^L  H_i(k)
\ee
with
\bea
X^\pm_i(k) & = & q^{-H_i/2} \otimes \dots \otimes q^{-H_i/2} \otimes X^\pm_i \otimes 
q^{H_i/2} \dots \otimes q^{H_i/2}, \\
H_i(k) & = & \mathds{1} \otimes \dots \otimes \mathds{1} \otimes H_i \otimes 
\mathds{1} \dots \otimes \mathds{1}.
\eea
For the full quantum algebra $U_q[\mathfrak{gl}(3)]$ we have
\bel{repH3}
\tilde{H}_1  = \sum_{k=1}^L \hat{a}_k =: \hat{N}, 
\quad \tilde{H}_2 = \sum_{k=1}^L \hat{\upsilon}_k=: \hat{V}, 
\quad \tilde{H}_3 = \sum_{k=1}^L \hat{b}_k =: \hat{M}.
\ee
Here $\hat{N}$ and $\hat{M}$ are the particle number operators satisfying
\be 
\hat{N} \ket{\eta_{N,M}} = N \ket{\eta_{N,M}}, \quad \hat{M} \ket{\eta_{N,M}} = M \ket{\eta_{N,M}}.
\ee
The unit $I$ is represented by the $3^L$-dimensional unit matrix
$\mathbf{1} := \mathds{1}^{\otimes L} = \hat{N} + \hat{V} + \hat{M}$.

Notice that $H_1(k) = \hat{a}_k - \hat{\upsilon}_k$ and $H_2(k) = \hat{\upsilon}_k - \hat{b}_k$.
In the $L$-fold tensor product $X^\pm_i(k)$ $(H_i(k))$ the term $X^\pm_i$ ($H_i$) is the $k$th factor.
Therefore \eref{fundrepX} yields
\bea
\label{repX1k}
X^\pm_1(k) & = & q^{- \half \sum_{j=1}^{k-1} (\hat{a}_j - \hat{\upsilon}_j) + 
\half \sum_{j=k+1}^L (\hat{a}_j - \hat{\upsilon}_j)} a^{\pm}_k \\
\label{repX2k}
X^\pm_2(k) & = & q^{- \half\sum_{j=1}^{k-1} (\hat{\upsilon}_j - \hat{b}_j) + 
\half \sum_{j=k+1}^L (\hat{\upsilon}_j - \hat{b}_j)} b^{\mp}_k .
\eea
One has $(X_i^\pm(k))^2=0$, $X_i^\pm(k) X_j^\mp(k)=0$ for $i\neq j$ and
\bea
X_i^\pm(k) X_i^\pm(l) & = & \left\{ \ba{cl} 
q^{\pm 2} X_i^\pm(l) X_i^\pm(k) & k<l \\ 
0 & k=l \\ q^{\mp 2} X_i^\pm(l) X_i^\pm(k) & k>l 
\ea \right. \\
X_i^\pm(k) X_j^\mp(l) & = & X_i^\pm(l) X_j^\mp(k) \mbox{ for } i\neq j.
\eea
Thus the spatial order in which particles are created (or annihilated)
by applying the operators $X_i^\pm(k)$ gives rise to
combinatorial issues when building many-particle configurations 
from the reference state corresponding to the empty lattice.

\subsubsection{The Perk-Schultz quantum spin chain}

We introduce the Perk-Schultz quantum spin chain \cite{Perk81}
\bel{PSHam}
G = \sum_{k=1}^{L-1} g_{k,k+1}
\ee
where $g_{k,k+1}$ is
reminiscent of \eref{hoppingmatrixk}, but with all non-zero
off-diagonal elements equal to $-1$, i.e.,
\bea
\frac{1}{w} g_{k,k+1} & = & q \left( \hat{a}_k \hat{\upsilon}_{k+1} + 
\hat{\upsilon}_k \hat{b}_{k+1} + \hat{a}_k \hat{b}_{k+1} \right)
+ q^{-1}  \left(\hat{\upsilon}_k \hat{a}_{k+1} + 
\hat{b}_k \hat{\upsilon}_{k+1}+  \hat{b}_k \hat{a}_{k+1}  \right)  \nonumber \\
& & 
- a^-_k a^+_{k+1}   - b^+_k b^-_{k+1}  - c^-_k c^+_{k+1} 
- a^+_k a^-_{k+1}   - b^-_k b^+_{k+1}  - c^+_k c^-_{k+1}.
\eea
The $g_{k,k+1}$ satisfy the
defining relations of the $(3,0)$-quotient of the Hecke algebra 
\cite{Mart92} which then implies \cite{Degu89}
that 
\bel{symmetryG}
\comm{g_{k,k+1}}{X^\pm_i} = \comm{g_{k,k+1}}{H_i} = 0.
\ee
Thus
the Perk-Schultz quantum Hamiltonian is symmetric under the action of the 
quantum algebra $U_q[\mathfrak{sl}(3)]$ and then trivially also under
$U_q[\mathfrak{gl}(3)]$.

\section{Proofs}

It was pointed out in \cite{Alca93} that the hopping matrices $h_{k,k+1}$  \eref{hoppingmatrixk} for the 
ASEP with second-class particles satisfy the
defining relations of the same $(3,0)$-quotient of the Hecke algebra as the $g_{k,k+1}$
of the Perk-Schultz quantum chain. This fact implies the existence of representation matrices
of the generators of $U_q[\mathfrak{gl}(3)]$ with which the hopping matrices $h_{k,k+1}$
and hence the generator \eref{2ASEPgen} commutes.
However, in order to make this symmetry property useful for probabilistic and physical applications
one must solve the main problem that was left open in \cite{Alca93}, which is to actually
construct this representation. This is the content of Theorem \eref{Theosymmetry}, proved below.
It turns out that both Theorem \eref{Theosymmetry} and Theorem \eref{Theoinvmeas}
are consequences of a proposition that we first motivate and then prove.

\subsection{Perk-Schultz chain and ASEP with second-class particles}

The point in case is that $G$ and $H$ differ only by multiplicative factors $q$ and $q^{-1}$
in their off-diagonal elements. 
Therefore the following proposition is a natural working hypothesis.

\begin{proposition}
\label{Propdiag}
Let $H$ and $G$ be the matrices defined in \eref{2ASEPgen} and \eref{PSHam}. 
There exists a diagonal similarity transformation $R$ such that 
\bel{trafo}
G = R^{-1} H R
\ee  
with an invertible matrix $R$ of dimension $3^L$.
\end{proposition}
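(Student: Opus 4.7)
The plan is to exploit the observation that $H$ and $G$ share identical diagonal parts and that their off-diagonal entries differ only by multiplicative factors $q^{\pm 1}$. Comparing the diagonal parts of \eref{hoppingmatrixk} and the definition of $g_{k,k+1}$, both reduce to $wq(\hat a_k\hat\upsilon_{k+1}+\hat\upsilon_k\hat b_{k+1}+\hat a_k\hat b_{k+1})+wq^{-1}(\hat\upsilon_k\hat a_{k+1}+\hat b_k\hat\upsilon_{k+1}+\hat b_k\hat a_{k+1})$; for an off-diagonal entry corresponding to a forward elementary transition of type $A0\to 0A$, $0B\to B0$, or $AB\to BA$ at some bond $(k,k+1)$, one has $H_{\eta',\eta}=-wq$ whereas $G_{\eta',\eta}=-w$, and analogously $-wq^{-1}$ vs.\ $-w$ for the reverse transitions.

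A diagonal similarity $R=\sum_\eta q^{\phi(\eta)}\ket{\eta}\bra{\eta}$ will therefore intertwine $H$ and $G$ provided a function $\phi\colon\mathbb{S}^L\to\mathbb{Z}$ can be found with the property that $\phi(\eta')-\phi(\eta)=+1$ for every forward elementary transition $\eta\to\eta'$. Granted such a $\phi$, any off-diagonal entry of $R^{-1}HR$ equals $q^{\phi(\eta)-\phi(\eta')}H_{\eta',\eta}=q^{\mp1}(-wq^{\pm1})=-w=G_{\eta',\eta}$, while diagonal entries are unaltered by any diagonal similarity, so $G=R^{-1}HR$ follows.

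To exhibit a suitable $\phi$, I propose the explicit formula
\begin{equation*}
\phi(\eta)=\sum_{k=1}^L k\bigl(a_k(\eta)-b_k(\eta)\bigr)+\sum_{1\le i<j\le L}a_i(\eta)\,b_j(\eta),
\end{equation*}
and verify directly the increment $\Delta\phi=+1$ under each of the three forward moves at an arbitrary bond $(k,k+1)$. For $A0\to 0A$ and $0B\to B0$, the first (linear) sum contributes $+1$ and the inversion count is preserved, since in either case no particle of the opposite species sits at the swap sites. For $AB\to BA$, the first sum contributes $+2$ because both $A$ and $B$ translate by one site in opposite directions, while the inversion count $\sum_{i<j}a_ib_j$ decreases by exactly $1$. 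Invertibility of $R$ is immediate from $q^{\phi(\eta)}\neq 0$ when $0<q<\infty$.

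The only delicate computation will be the bookkeeping of the inversion count under the $AB\to BA$ swap: one must track the contributions of the pair $(k,k+1)$ itself and of the four families of pairs $(i,k)$, $(i,k+1)$, $(k,j)$, $(k+1,j)$ for $i<k$ and $j>k+1$, and check that the latter cancel in pairs because $a_i$ and $b_j$ are unchanged at sites other than $k,k+1$, leaving only the net loss of the single pair at the swap bond. Once this increment is confirmed, matching matrix elements on and off the diagonal completes the identification $G=R^{-1}HR$.
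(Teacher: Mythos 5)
Your proof is correct, and it takes a genuinely different route from the paper's. You treat the problem as a pure gauge argument: since $H$ and $G$ have identical diagonal parts and their off-diagonal entries differ only by factors $q^{\pm 1}$, any diagonal $R=\sum_\eta q^{\phi(\eta)}\ket{\eta}\bra{\eta}$ with $\phi(\eta')-\phi(\eta)=+1$ along every forward elementary move intertwines them, and your explicit $\phi(\eta)=\sum_k k\,(a_k-b_k)+\sum_{i<j}a_i b_j$ does satisfy this: the three increments are $+1$, $+1$ and $+2-1$, and in the $AB\to BA$ case the pairs $(i,k)$ versus $(i,k+1)$ and $(k,j)$ versus $(k+1,j)$ indeed cancel because the occupations away from the swap bond are untouched, leaving only the loss of the single inversion $a_k b_{k+1}$. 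The paper instead \emph{derives} $R$ from the $U_q[\mathfrak{sl}(3)]$ symmetry of the Perk--Schultz chain, computing $\bra{s_{0,0}}(X_1^-)^N(X_2^+)^M$ explicitly to obtain the candidate $R=q^{\hat U/2}$ and then verifying \eref{trafo} through the operator transformation lemmas for $a^\pm_x$, $b^\pm_x$, $c^\pm_x$ (Lemmas \ref{abtrafolmx} and \ref{abctrafo}). Your $\phi$ and the paper's $\tfrac12 U(\eta)$ differ only by the additive function $-\tfrac{L+1}{2}(N-M)-\tfrac12 NM$ of the conserved particle numbers, which is immaterial for the conjugation since both $H$ and $G$ preserve $N$ and $M$; the paper makes exactly this point when noting that the normalization factors $Y_L(N,M)$ are arbitrary. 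What your approach buys is brevity and self-containedness (no quantum-algebra input at all); what the paper's buys is the specific form $R=q^{\hat U/2}$, which is reused verbatim to produce the representation matrices of Theorem \ref{Theosymmetry} and the reversible measure $\hat\pi=R^2$ of Theorem \ref{Theoinvmeas}, so if you carried your $R$ forward you would obtain those results only up to sector-dependent constants.
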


\begin{proof} 

In order to prove this proposition we use the quantum algebra symmetry of the Perk-Schultz chain
to first construct a good candidate for such
a transformation and then prove that it satisfies \eref{trafo}.\\

\noindent \underline{\it (1) Construction of a candidate $R$:} 
Fix the numbers $N$ of particles of type $A$ and $M$ of particles of type $B$.
By ergodicity the summation vector $\bra{s_{N,M}}$ \eref{sumvecNM}
is the unique left eigenvector with eigenvalue 0 of $H$ restricted to configurations with 
$N$ particles of type $A$ and $M$ particles of type $B$.

For $N=M=0$ one readily verifies that
$\bra{s_{0,0}} H = \bra{s_{0,0}} G = 0$.
The symmetry \eref{symmetryG} then yields
$\bra{s_{0,0}} (X_1^-)^N (X_2^+)^M G = 0$. Therefore, if $R$ exists it has the property
\be
\bra{s_{N,M}} = Y_L(N,M)  \bra{0,0} (X_1^-)^N (X_2^+)^M R^{-1}
\ee
with a normalization factor $Y_L(N,M)$.
Notice now that $\bra{s} = \sum_{N=0}^L \sum_{M=0}^{L-N} \bra{s_{N,M}}$. Therefore
\bea
\bra{s} R & =  & \sum_{N=0}^L \sum_{M=0}^{L-N} \bra{s_{N,M}} R \\
& = &
\label{D1}
 \sum_{N=0}^L \sum_{M=0}^{L-N} Y_L(N,M)  \bra{s_{0,0}} (X_1^-)^N (X_2^+)^M.
\eea

Now we make the diagonal ansatz $R = \sum_{\eta} R(\eta) \ket{\eta}\bra{\eta}$,
normalized such that
\bel{normD}
\bra{s_{0,0}} R \ket{\emptyset} = 1 = \bra{s_{0,0}} R^{-1} \ket{\emptyset}.
\ee 
Thus we obtain
from \eref{D1} that
\bel{D2}
R(\eta_{N,M}) = Y_L(N,M)  \bra{s_{0,0}} (X_1^-)^N (X_2^+)^M \ket{\eta_{N,M}}.
\ee
The normalization factors $Y_L(N,M)$ are arbitrary, since they can be absorbed by 
redefining $R = \tilde{R} E$
where $E$ is a diagonal matrix with matrix elements $Y_L(N,M)$ in the block $N,M$. Since
both $G$ and $H$ conserve particle number for both species one has $EHE^{-1} = H$ and
$EGE^{-1} = G$. Therefore $G = \tilde{R}^{-1} H \tilde{R}$ which implies that 
$G = \tilde{R}^{-1} H \tilde{R}$. Hence
the $Y_L(N,M)$ can indeed be chosen arbitrarily. 
It turns out to be convenient to choose
\be
Y_L(N,M) = \left( [N]_q ! [M]_q ! \right)^{-1}.
\ee

This reduces the computation of the diagonal elements of $R$ to the computation 
of the matrix elements  $\bra{0,0} (X_1^-)^N (X_2^+)^M \ket{\eta_{N,M}}$ from the explicit
representation \eref{repX}.  
In order to compute $R$ we first set $M=0$ and use 
\be
\bra{s_{0,0}} \frac{(X^-_1)^N}{[N]_q!} = \sum_{\bfx} 
q^{\sum_{k=1}^N x_k - N \frac{L+1}{2}} \bra{\bfx,\emptyset} 
\ee
which is a simple adaptation of an analogous result for the standard
single-species ASEP taken from \cite{Schu97}.
The sum over $\bfx$ stands for the sum over all particle positions $x_i$
ordered such that $x_i < x_j$ for $i<j$, which is the sum over all distinct sets
of particle positions. Therefore $\sum_{\bfx,\emptyset}\bra{\bfx,\emptyset} 
= \bra{s_{N,0}}$ which allows us to write
\bel{vacuumactionX1}
R(\bfx,\emptyset) = q^{\sum_{k=1}^{|\bfx|} x_k - |\bfx| \frac{L+1}{2}}
\ee
and, using \eref{projabL},
\be
\bra{s_{0,0}} \frac{(X^-_1)^N}{[N]_q!} = 
\bra{s_{N,0}} q^{\sum_{k=1}^L \left(k  - \frac{L+1}{2} \right) \hat{a}_k} .
\ee

Next we apply $(X^+_2)^M$ \eref{repX2k} to this vector
and observe that for the factors $q^{\pm H_2(k)/2}$ 
that appear in $X^+_2$ (instead of the $q^{\pm H_1(k)/2}$ that appear in $X^-_1$) any 
$A$-particle is like a non-existent site, since $H_2$ is build from projectors
on $B$-particles and vacancies. Hence, with regard to the action of $(X^+_2)^M$,
the existence of $A$-particles in a configuration
$\eta_{N,M}\equiv z_{N,M}$ with $N$ particles of type $A$ 
behaves like a reduction of system size $L\to \tilde{L} = L-N$ and a coordinate shift 
\be
y_i \to \tilde{y}_i = y_i - N_{y_i}(\bfz)
\ee 
for $B$-particles with $N_{k}(\bfz)$ defined in \eref{NyMx}.
Therefore the action of $(X^+_2)^M/[M]_q!$ on $\bra{s_{0,0}} \frac{(X^-_1)^N}{[N]_q!}$
yields a $q$-factor similar to the one in \eref{vacuumactionX1}, but with $N$ replaced by $M$,
$L$ replaced by $\tilde{L}$, $x_k$ replaced by $\tilde{y}_k$ and $q$ replaced by $q^{-1}$.
We conclude
\be
\bra{s_{0,0}} \frac{(X^-_1)^N}{[N]_q!} \frac{(X^+_2)^M}{[M]_q!} =  \sum_{\bfz_{N,M}}   
R(\bfz_{N,M})  \bra{\bfz_{N,M}} 
\ee
where the sum is over all distinct coordinate sets and
\bel{Rbfz}
R(\bfz_{N,M}) = q^{\frac{1}{2} \left[ (M-N)(L+1)-MN \right] + \sum_{i=1}^N x_i - \sum_{i=1}^M \tilde{y}_i}.
\ee

Next observe that for any configuration $\eta$
\bea
\sum_{i=1}^{M(\eta)} N_{y_i}(\eta) & = & \sum_{k=1}^L \sum_{l=1}^{k-1} a_l b_k 
= N(\eta) M(\eta) - \sum_{k=1}^L \sum_{l=1}^{k-1} b_l a_k \nonumber \\
& = & N(\eta) M(\eta) - \sum_{i=1}^{N(\eta)} M_{x_i}(\eta).
\eea
For $\eta=\bfz_{N,M}$ this yields 
\be
R(\bfz_{N,M}) = q^{\frac{1}{2} \left[ \sum_{i=1}^N (2 x_i - L-1 - M_{x_i}(\eta))
- \sum_{i=1}^M (2 y_i - L-1 - N_{y_i}(\eta)) \right]}.
\ee
and with Lemma \eref{projlem} 
\be
\bra{s_{0,0}} \frac{(X^-_1)^N}{[N]_q!} \frac{(X^+_2)^M}{[M]_q!} = \bra{s_{N,M}} q^{\frac{1}{2}\hat{U}}
\ee
with
\be
\hat{U} =  \sum_{k=1}^L \left(2 k - L -1 \right)\left(\hat{a}_k - \hat{b}_k\right) 
+  \sum_{k=1}^{L-1} \sum_{l=1}^k \left( \hat{a}_l \hat{b}_{k+1}  -  \hat{b}_l \hat{a}_{k+1} \right).
\ee
Notice that the matrix $\hat{U}$ does not depend on $N$ and $M$.
Taking the sum over $N$ and $M$ then yields from \eref{D1}
the diagonal candidate matrix
\be
R = q^{\frac{1}{2}\hat{U}} .
\ee

\noindent \underline{\it (2) Proof of the transformation property \eref{trafo}:} 
We stress that the properties of $R$ that we have used in its construction are only
necessary conditions for the transformation property \eref{trafo} to be valid.
In order to prove this property we need two more technical 
ingredients. The first is a transformation lemma, proved in \cite{Beli15}

\begin{lemma}
\label{abtrafolmx}
For any finite $p\neq 0$ we have 
\bea
\label{abtrafo1a}
& & p^{\hat{a}_l} a_x^\pm p^{-\hat{a}_l} = p^{\pm \delta_{l,x}} a_x^\pm, \quad
p^{\hat{b}_l} a_x^\pm p^{-\hat{b}_l} =  a_x^\pm, \\
\label{abtrafo1b}
& & p^{\hat{b}_l} b_x^\pm p^{-\hat{b}_l} = p^{\pm \delta_{l,x} } b_x^\pm, \quad
p^{\hat{a}_l} b_x^\pm p^{-\hat{a}_l} =  b_x^\pm \\
\label{abtrafo2a}
& & p^{\hat{a}_l \hat{b}_m} a_x^\pm p^{-\hat{a}_l \hat{b}_m} = p^{\pm \delta_{l,x} \hat{b}_m} a_x^\pm,\\
\label{abtrafo2b}
& & p^{\hat{a}_l \hat{b}_m} b_x^\pm p^{-\hat{a}_l \hat{b}_m} = p^{\pm \delta_{m,x} \hat{a}_l} b_x^\pm.
\eea
\end{lemma}

Applying these transformations yields the following auxiliary result.

\begin{lemma}
\label{abctrafo}
The local creation and annihilation operators transform as follows:
\bea
\label{axtrafo}
R a_x^\pm R^{-1} & = & q^{\mp \half \sum_{k=1}^{x-1} (\hat{b}_k-\mathds{1}) 
\pm \half \sum_{k=x+1}^{L} (\hat{b}_k-\mathds{1})} a_x^\pm, \\
\label{bxtrafo}
R b_x^\pm R^{-1} & = & q^{\pm \half \sum_{k=1}^{x-1} (\hat{a}_k-\mathds{1}) 
\mp \half \sum_{k=x+1}^{L} (\hat{a}_k-\mathds{1})} b_x^\pm \\
\label{cxtrafo}
R c_x^\pm R^{-1} & = & q^{\pm \half \sum_{k=1}^{x-1} (\hat{\upsilon}_k+\mathds{1}) 
\mp \half \sum_{k=x+1}^{L} (\hat{\upsilon}_k+\mathds{1})} c_x^\pm
\eea
\end{lemma}

\begin{proof}
To prove these identities we use Lemma \eref{abtrafolmx} and commutativity of the
projection operators. This yields
\bea
& & p^{\sum_{k=1}^{L-1} \sum_{l=1}^k (\hat{a}_l \hat{b}_{k+1} - \hat{b}_l \hat{a}_{k+1})}
a_x^\pm p^{-\sum_{k=1}^{L-1} \sum_{l=1}^k (\hat{a}_l \hat{b}_{k+1} - \hat{b}_l \hat{a}_{k+1})}  
= \nonumber \\
& & p^{\sum_{k=1}^{L-1} \sum_{l=1}^k (\pm \delta_{l,x} \hat{b}_{k+1} \mp \delta_{k+1,x} \hat{b}_l )} 
a_x^\pm =  p^{ \mp \sum_{k=1}^{x-1} \hat{b}_{k} \pm \sum_{k=x+1}^{L} \hat{b}_k )} a_x^\pm 
\eea
and
\be 
p^{\sum_{k=1}^{L} (2k-L-1) (\hat{a}_k - \hat{b}_{k}) }
a_x^\pm p^{- \sum_{k=1}^{L} (2k-L-1) (\hat{a}_k - \hat{b}_{k})} =
p^{\pm (2x-L-1)} a_x^\pm
\ee
Joining both yields \eref{axtrafo} and a similar computation yields \eref{bxtrafo}.
Finally, \eref{cxtrafo} follows from $c_k^+ = a_k^+b_k^-$, $c_k^- = b_k^+a_k^-$. \qed
\end{proof}

Now we can prove \eref{trafo}.
We split $g_{k,k+1} = g^\rmd_{k,k+1} - g^{\rm o}_{k,k+1}$ 
into its diagonal part $g^\rmd_{k,k+1}$ and its off-diagonal part $g^{\rm o}_{k,k+1}$ 
and similarly for $h_{k,k+1}$. Trivially one has $R g^\rmd_{k,k+1} R^{-1} =
 g^{\rmd}_{k,k+1} =  h^{\rmd}_{k,k+1}$.

For the offdiagonal parts, we consider first $a_k^\pm a_{k+1}^\mp$. 
Eq. \eref{axtrafo} in Lemma \eref{abctrafo} yields
\be 
R a_k^\pm a_{k+1}^\mp R^{-1} = 
a_k^\pm q^{\pm \half  (\hat{b}_{k+1}-\mathds{1})} 
q^{\pm \half (\hat{b}_k-\mathds{1})} a_{k+1}^\mp.
\ee 
The general projector property $p^{\hat{b}_{m}} = 1 + (p-1) \hat{b}_{m}$.
together with \eref{projhatbr} and \eref{projhatbl} applied to the subspaces $k$ and $k+1$
lead to 
\be 
R a_k^\pm a_{k+1}^\mp R^{-1} = q^{\mp 1} a_k^\pm a_{k+1}^\mp.
\ee
In the same fashion one proves 
\be
R b_k^\pm b_{k+1}^\mp R^{-1} = q^{\pm 1} b_k^\pm b_{k+1}^\mp.
\ee

Finally, by similar arguments 
\be
R c_k^\pm c_{k+1}^\mp R^{-1} = c_k^\pm q^{\mp \half  (\hat{\upsilon}_{k+1}+\mathds{1})} 
q^{\mp \half (\hat{\upsilon}_k+\mathds{1})} c_{k+1}^\mp \nonumber \\
= q^{\mp 1} c_k^\pm c_{k+1}^\mp.
\ee 
Comparing with \eref{hoppingmatrixk} shows that $R g^{\rm o}_{k,k+1} R^{-1}
= h^{\rm o}_{k,k+1}$ and thus $R G R^{-1} = H$. 
\qed
\end{proof}

\subsection{Proof of Theorem \eref{Theosymmetry}}

\begin{proof} From 
\eref{axtrafo}, the commutativity of the projectors at different sites,
and \eref{fundrepH}
lifted to the tensor space, one finds that
the local generators $X^\pm_i(r)$ transform as follows:
\bea
\label{X1ptrafo}
Y^+_1(r) := R X^+_1(r) R^{-1} & = & q^{ \sum_{k=1}^{r-1} \hat{\upsilon}_k
- \sum_{k=r+1}^{L} \hat{\upsilon}_k} a_r^+, \\
\label{X1mtrafo}
Y^-_1(r) := R X^-_1(r) R^{-1} & = & q^{- \sum_{k=1}^{r-1} \hat{a}_k
+ \sum_{k=r+1}^{L} \hat{a}_k} a_r^-, \\
\label{X2ptrafo}
Y^+_2(r) := R X^+_2(r) R^{-1} & = & q^{ \sum_{k=1}^{r-1} \hat{b}_k 
- \sum_{k=r+1}^{L} \hat{b}_k} b_r^- \\
\label{X2mtrafo}
Y^-_2(r) := R X^-(r) R^{-1} & = & q^{- \sum_{k=1}^{r-1} \hat{\upsilon}_k 
+ \sum_{k=r+1}^{L} \hat{\upsilon}_k} b_r^+
\eea
Moreover, since $R$ and $\tilde{H}_i$ are all diagonal one has
$ R \tilde{H}_i R^{-1} = \tilde{H}_i$. 
Commutativity of the hopping matrices $h_{k,k+1}$ in
$H$ with $Y^\pm_i$ and $L_j$ follows from Proposition
\eref{Propdiag} and the symmetry \eref{symmetryG} of the Perk-Schultz quantum chain. 

To prove the explicit expressions \eref{repLj}, \eref{repYi} for the representations 
we focus on $Y^+_1(r)$. Using the
fundamental representation and the factorization property \eref{tensorfactor}
of the inner product of tensor vectors 
one finds $\bra{\eta'} a_r^+ \ket{\eta} = \delta_{\eta',\eta^{r-}}$.
The terms in the exponential follow trivially from Lemma \eref{projlem}
and the definitions \eref{partnum2}, \eref{NyMx}. Following similar arguments
for the other generators yields the
matrix elements
of $Y^\pm_i$ and $L_j$ \eref{repLj}, \eref{repYi} as stated in \eref{Theosymmetry}. \qed
\end{proof}

\subsection{Proof of Theorem \eref{Theoinvmeas}}

\begin{proof}
Since $G$ is symmetric and $R$ is diagonal, Proposition \eref{Propdiag} implies
\bel{HT}
H^T = (RR^T)^{-1} H RR^T = R^{-2} H R^2.
\ee
With \eref{revtrans2} we thus have reversibility 
with a reversible measure $\hat{\pi} = R^2$ in the matrix form
\eref{invmeasmatrix}.
By the projector lemma \eref{projlem} $\hat{\pi}$ yields the reversible measure
\eref{theo1}
of Theorem \eref{Theoinvmeas}. \qed
\end{proof}

\begin{acknowledgement}
This work was supported by DFG and by CNPq through the grant 307347/2013-3.
GMS thanks the University of S\~ao Paulo,
where part of this work was done, for kind hospitality.
\end{acknowledgement}

\section*{Appendix}
\addcontentsline{toc}{section}{Appendix}

We display some explicit results for unnormalized stationary distributions
for small lattices $L=2,3,4$ and also $L$ arbitrary with small particle numbers $N+M = 1,2,3,4$.
$$
\ba{lll}
\hline
N=0 & M=0 & \ket{00}  \hspace*{3cm} L=2 \hspace*{5cm} \\
\hline \\[-4mm]
N=1 & M=0 & q^{-1} \ket{A0} + q \ket{0A}  \\
N=0 & M=1 & q \ket{B0} + q^{-1} \ket{0B} \\[1mm]
\hline \\[-4mm]
N=2 & M=0 &  \ket{AA}  \\
N=1 & M=1 & q^{-1} \ket{AB} + q \ket{BA}  \\
N=0 & M=2 &  \ket{BB}   \\
\hline \\[-4mm]
& &
\ea
$$
$$
\ba{lll}
\hline
N=0 & M=0 & \ket{000} \hspace*{3cm} L=3 \\
\hline \\[-4mm]
N=1 & M=0 & q^{-2} \ket{A00} + \ket{0A0} + q^{2} \ket{00A} \\
N=0 & M=1 & q^{-2} \ket{00B} + \ket{0B0} + q^{2} \ket{B00} \\
\hline \\[-4mm]
N=2 & M=0 &  q^{-2} \ket{AA0} + \ket{A0A} + q^{2} \ket{0AA}  \\
N=1 & M=1 & q^{-3} \ket{A0B} + q^{-1} (\ket{AB0}+\ket{0AB}) 
+ q (\ket{BA0}+\ket{0BA}) + q^{3} \ket{B0A}  \\
N=0 & M=2 &  q^{-2} \ket{0BB} + \ket{B0B} + q^{2} \ket{BB0}  \\
\hline \\[-4mm]
N=3 & M=0 & \ket{AAA}  \\
N=2 & M=1 & q^{-2} \ket{AAB} + \ket{ABA} + q^{2} \ket{BAA}  \\
N=1 & M=2 &  q^{-2} \ket{ABB} + \ket{BAB} + q^{2} \ket{BBA}  \\
N=0 & M=3 & \ket{BBB} \\
\hline \\[-4mm]
& &
\ea
$$
$$
\ba{lll}
\hline
N=0 & M=0 & \ket{0000}  \hspace*{3cm} L=4 \\
\hline \\[-4mm]
N=1 & M=0 & q^{-3} \ket{A000} + q^{-1} \ket{0A00} + q \ket{00A0} + q^{3} \ket{000A}\\
N=0 & M=1 & q^{-3} \ket{000B} + q^{-1} \ket{00B0} + q \ket{0B00} + q^{3} \ket{B000}\\
\hline \\[-4mm]
N=2 & M=0 &  q^{-4} \ket{AA00} + q^{-2} \ket{A0A0} + \ket{A00A} + \ket{0AA0}  + q^{2} \ket{0A0A}+ q^{4} \ket{00AA}\\
N=1 & M=1 & q^{-5} \ket{A00B} + q^{-3} (\ket{A0B0} + \ket{0A0B}) + q^{-1} (\ket{AB00} + \ket{0AB0} + \ket{00AB}) \\
& & + q (\ket{BA00} + \ket{0BA0} + \ket{00BA})+ q^{3} (\ket{B0A0} + \ket{0B0A}) + q^{5} \ket{B00A}\\
N=0 & M=2 &  q^{-4} \ket{00BB} + q^{-2} \ket{0B0B} + \ket{B00B} + \ket{0BB0}  + q^{2} \ket{B0B0}+ q^{4} \ket{BB00}\\
\hline \\[-4mm]
N=3 & M=0 & q^{-3} \ket{AAA0} + q^{-1} \ket{AA0A} + q \ket{A0AA} + q^{3} \ket{0AAA}\\
N=2 & M=1 & q^{-5} \ket{AA0B} + q^{-3} (\ket{AAB0} + \ket{A0AB}) + q^{-1} (\ket{0AAB} + \ket{ABA0} + \ket{A0BA})\\
& & + q (\ket{AB0A} + \ket{0ABA} + \ket{BAA0}) + q^{3} (\ket{BA0A} + \ket{0BAA}) + q^{5} \ket{B0AA} \\
N=1 & M=2 & q^{-5} \ket{A0BB} + q^{-3} (\ket{AB0B} + \ket{0ABB}) + q^{-1} (\ket{BA0B} + \ket{0BAB} + \ket{ABB0})\\
& & + q (\ket{0BBA} + \ket{BAB0} + \ket{B0AB}) + q^{3} (\ket{BBA0} + \ket{B0BA}) + q^{5} \ket{BB0A} \\
N=0 & M=3 & q^{-3} \ket{0BBB} + q^{-1} \ket{B0BB} + q \ket{BB0B} + q^{3} \ket{BBB0}\\
\hline \\[-4mm]
N=4 & M=0 & \ket{AAAA}  \\
N=3 & M=1 & q^{-3} \ket{AAAB} + q^{-1} \ket{AABA} + q \ket{ABAA} + q^{3} \ket{BAAA}\\
N=2 & M=2 &  q^{-4} \ket{AABB} + q^{-2} \ket{ABAB} + \ket{ABBA} + \ket{BAAB}  + q^{2} \ket{BABA}+ q^{4} \ket{BBAA}\\
N=1 & M=3 & q^{-3} \ket{ABBB} + q^{-1} \ket{BABB} + q \ket{BBAB} + q^{3} \ket{BBBA}\\
N=0 & M=4 & \ket{BBBB}  \\
\hline \\[-4mm]
& &
\ea
$$

\noindent \underline{$N+M=1$:} 
\bea
\pi^\ast(\{x\},\emptyset) & \propto & q^{2x-1}  \nonumber \\
\pi^\ast(\emptyset,\{y\}) & \propto & q^{-2y+1}  \nonumber 
\eea

\noindent \underline{$N+M=2$:} 
\bea
\pi^\ast(\{x_1,x_2\},\emptyset) & \propto & q^{2 x_1 + 2 x_2 - 2}\nonumber \\
\pi^\ast(\{x_1\},\{y_1\}) & \propto & 
\left\{ \ba{ll} q^{2 x_1 - 2 y_1 - 1} & \quad y_1<x_1 \\  q^{2 x_1 - 2 y_1 + 1} & \quad y_1>x_1 \ea \right. \nonumber \\
\pi^\ast(\emptyset,\{y_1,y_2\}) & \propto & q^{-2 y_1 - 2 y_2 + 2} \nonumber 
\eea

\noindent \underline{$N+M=3$:}
\bea
\pi^\ast(\{x_1,x_2,x_3\},\emptyset) & \propto & q^{2 x_1 + 2 x_2 + 2 x_3 - 3}\nonumber \\
\pi^\ast(\{x_1,x_2\},\{y_1\}) & \propto & \left\{ \ba{ll} q^{2 x_1 + 2 x_2 - 2 y_1 - 3} & \quad y_1<x_1,x_2 \\  
q^{2 x_1 + 2 x_2 - 2 y_1 - 1} & \quad x_1< y_1<x_2 \\
q^{2 x_1 + 2 x_2 - 2 y_1 + 1} & \quad x_1,x_2< y_1
\ea \right. \nonumber \\
\pi^\ast(\{x_1\},\{y_1,y_2\}) & \propto & \left\{ \ba{ll} q^{2 x_1 - 2 y_1 - 2 y_2 - 1} & \quad y_1,y_2<x_1 \\  
q^{2 x_1 - 2 y_1 - 2 y_2 + 1} & \quad y_1< x_1<y_2 \\
q^{2 x_1 - 2 y_1 - 2 y_2 + 3} & \quad x_1< y_1,y_2
\ea \right. \nonumber \\
\pi^\ast(\emptyset,\{y_1,y_2,y_3\}) & \propto & q^{- 2 y_1 - 2 y_2 - 2 y_3 + 3}\nonumber
\eea

\noindent \underline{$N+M=4$:}

\bea
\pi^\ast(\{x_1,x_2,x_3,x_4\},\emptyset) & \propto & q^{2 x_1 + 2 x_2 + 2 x_3 + 2 x_4 - 4}\nonumber \\
\pi^\ast(\{x_1,x_2,x_3\},\{y_1\}) & \propto & \left\{ \ba{ll} 
q^{2 x_1 + 2 x_2 + 2 x_3 - 2 y_1 - 5} & \quad y_1<x_1,x_2,x_3 \\  
q^{2 x_1 + 2 x_2 + 2 x_3 - 2 y_1 - 3} & \quad x_1< y_1<x_2,x_3 \\
q^{2 x_1 + 2 x_2 + 2 x_3 - 2 y_1 - 1} & \quad x_1,x_2< y_1<x_3 \\
q^{2 x_1 + 2 x_2 + 2 x_3 - 2 y_1 + 1} & \quad x_1,x_2,x_3< y_1
\ea \right. \nonumber \\
\pi^\ast(\{x_1,x_2\},\{y_1,y_2\}) & \propto & \left\{ \ba{ll} 
q^{2 x_1 + 2x_2 - 2 y_1 - 2 y_2 - 4} & \quad y_1,y_2<x_1,x_2 \\  
q^{2 x_1 + 2x_2 - 2 y_1 - 2 y_2 - 2} & \quad y_1<x_1<y_2<x_2 \\  
q^{2 x_1 + 2x_2 - 2 y_1 - 2 y_2}      & \quad y_1<x_1,x_2<y_2 \\  
q^{2 x_1 + 2x_2 - 2 y_1 - 2 y_2}      & \quad x_1<y_1,y_2<x_2 \\  
q^{2 x_1 + 2x_2 - 2 y_1 - 2 y_2 + 2} & \quad x_1<y_1<x_2<y_2 \\  
q^{2 x_1 + 2x_2 - 2 y_1 - 2 y_2 + 4} & \quad x_1,x_2<y_1,y_2 
\ea \right. \nonumber \\
\pi^\ast(\{x_1\},\{y_1,y_2,y_3\}) & \propto & \left\{ \ba{ll} 
q^{2 x_1 - 2 y_1 - 2 y_2 - 2 y_3 - 1} & \quad y_1,y_2,y_3 < x_1\\  
q^{2 x_1 - 2 y_1 - 2 y_2 - 2 y_3 + 1} & \quad y_1,y_2< x_1<x_3  \\
q^{2 x_1 - 2 y_1 - 2 y_2 - 2 y_3 + 3} & \quad y_1< x_1<y_2,y_3 \\
q^{2 x_1 - 2 y_1 - 2 y_2 - 2 y_3 + 5} &  \quad x_1 < y_1,y_2,y_3
\ea \right. \nonumber \\
\pi^\ast(\emptyset,\{y_1,y_2,y_3,y_4\}) & \propto & q^{- 2 y_1 - 2 y_2 - 2 y_3 - 2 y_4 + 4}\nonumber
\eea

\end{document}